\documentclass[journal]{IEEEtran}

\usepackage{cite}
\usepackage{amsmath,amssymb,amsthm}
\usepackage{booktabs}
\usepackage{hyperref}
\usepackage{xcolor}
\usepackage{pifont}
\usepackage{enumitem}
\usepackage{balance}
\usepackage{tikz}
\usetikzlibrary{shapes,arrows,positioning,fit}

\newtheorem{theorem}{Theorem}
\newtheorem{prop}{Property}
\newtheorem{proposition}{Proposition}
\newtheorem{definition}{Definition}

\begin{document}

\title{SentinelAgent: Intent-Verified Delegation Chains\\for Securing Federal Multi-Agent AI Systems}

\author{KrishnaSaiReddy Patil}

\maketitle

\begin{abstract}
When Agent~A delegates to Agent~B, which invokes Tool~C on behalf of User~X, no existing framework can answer: whose authorization chain led to this action, and where did it violate policy? This paper introduces SentinelAgent, a formal framework for verifiable delegation chains in federal multi-agent AI systems. The \emph{Delegation Chain Calculus} (DCC) defines seven properties, six deterministic (authority narrowing, policy preservation, forensic reconstructibility, cascade containment, scope-action conformance, output schema conformance) and one probabilistic (intent preservation), with four meta-theorems (property minimality, graceful degradation, defense-in-depth completeness over 126 evasion combinations, and composition safety) and one proposition establishing the practical infeasibility of deterministic intent verification. The \emph{Intent-Preserving Delegation Protocol} (IPDP) enforces all seven properties at runtime through a non-LLM Delegation Authority Service. A three-point verification lifecycle, comprising pre-execution intent checking, at-execution scope enforcement, and post-execution output validation, achieves 100\% combined TPR at 0\% FPR on DelegationBench~v4 (516 scenarios across 10 attack categories and 13 federal domains), with each layer catching attacks the others miss. Under black-box adversarial conditions where the attacker does not know the manifest or schema contents, the DAS blocks 30/30 attacks with 0 false positives. Deterministic properties are unbreakable under adversarial stress testing; intent verification degrades to 13\% against sophisticated paraphrasing, an honest limitation termed the \emph{adversarial intent paraphrasing} problem. Fine-tuning the NLI model on 190 government delegation examples improves P2 malicious detection from 1.7\% to 88.3\% TPR (5-fold cross-validated, F1=82.1\%) while the three-layer pipeline maintains 0\% system-level FPR via the benign override mechanism. Properties P1, P3--P7 are mechanically verified via TLA+ model checking across 2.7 million states with zero violations. Even when intent verification is evaded, the remaining six properties constrain the adversary to permitted API calls, conformant outputs, traceable actions, bounded cascades, and compliant behavior.
\end{abstract}

\begin{IEEEkeywords}
multi-agent AI, delegation chain, intent verification, zero trust, federal security, NIST 800-53, autonomous agents, MCP
\end{IEEEkeywords}

\section{Introduction}

The transition from conversational AI to agentic AI changes the attack surface fundamentally. Chatbots generate text. AI agents call APIs, execute code, manage files, and make decisions with minimal human oversight. In September 2025, Anthropic detected GTG-1002, the first documented AI-orchestrated cyber espionage campaign, where a Chinese state-sponsored group weaponized Claude Code to autonomously attack approximately 30 organizations with AI handling 80--90\% of tactical operations~\cite{gtg1002_anthropic}. AI-enabled attacks surged 89\% year-over-year according to the CrowdStrike 2026 Global Threat Report~\cite{crowdstrike2026}, and 88\% of enterprises experienced confirmed or suspected AI agent security incidents~\cite{gravitee_survey}.

Federal agencies are deploying multi-agent systems at scale. The NIST AI Agent Standards Initiative, launched February 17, 2026, explicitly seeks security guidance for autonomous agent deployments~\cite{nist_agent_initiative}. Yet a survey of 285 IT and security professionals found that 84\% cannot pass a compliance audit focused on agent behavior, only 23\% have a formal agent identity strategy, and only 18\% are confident their IAM can manage agent identities~\cite{csa_atf_survey}.

Since the initial development of this framework, the field has grown rapidly: over 50 papers on agentic AI security appeared in Q1 2026 alone. AIP~\cite{aip} handles delegation chain identity with capability tokens. Authenticated Workflows~\cite{auth_workflows} enforces policies deterministically via the MAPL language. FormalJudge~\cite{formaljudge} uses Z3 SMT solving for oversight. ILION~\cite{ilion} gates execution at 143 microseconds. Agentic JWT~\cite{agentic_jwt} binds actions to user intent tokens. ASTRA~\cite{astra} matches tasks to scopes at authorization time. Allegrini et al.~\cite{allegrini_formal} formalize 31 temporal logic properties for single-host systems. Each addresses a piece of the problem. None formally models how authority, intent, and compliance compose across multi-hop delegation chains while also providing a runtime enforcement protocol.

\textbf{The delegation accountability gap.} Consider a federal benefits processing system where an intake agent (Agency~A) delegates to a records retrieval agent (Contractor~B), which calls a medical database tool (Cloud Provider~C), and passes results to a decision-support agent (Agency~D). When Agent~D recommends denial based on a misinterpreted lab value, no existing system can answer: (1)~Who authorized Agent~B to access those specific medical records? (2)~Did the citizen's original intent survive intact through the A$\to$B$\to$C$\to$D chain? (3)~At which step did the misinterpretation occur? (4)~Does this chain comply with NIST~800-53 AC-4 (Information Flow Enforcement) and the Privacy Act? (5)~Can the complete chain be reconstructed for a FISMA audit?

\textbf{Why existing work is insufficient.} Individual components of this problem have been addressed in isolation. SEAgent~\cite{seagent} handles access control but not delegation chains. ShieldAgent~\cite{shieldagent} verifies safety policies but not multi-agent delegation. Agent Behavioral Contracts~\cite{abc} bounds drift for single agents, not chains. South et al.~\cite{oauth_delegation} handle authentication but not intent or compliance. The OWASP Top~10 for Agentic Applications~\cite{owasp_agentic} identifies the risks but provides no formal defense. The CSA Agentic Trust Framework~\cite{csa_atf} provides governance maturity models but no enforcement mechanism. The missing piece is a single formal model that ties authority, intent, compliance, forensics, and containment together across multi-agent delegation.

\textbf{Contributions.} This paper introduces SentinelAgent with five contributions:

\begin{enumerate}[leftmargin=*]
\item \textbf{Delegation Chain Calculus (DCC)} (\S\ref{sec:dcc}): A formal model where each delegation step carries authority scope, intent vector, and policy constraints. Seven properties are proven: authority monotonic narrowing (P1), intent entailment preservation (P2), policy conjunction preservation (P3), forensic reconstructibility in $O(n)$ (P4), bounded cascade containment (P5), scope-action conformance (P6), and output schema conformance (P7). Properties P1, P3--P7 are deterministic; P2 is probabilistic with characterized bounds.

\item \textbf{Intent-Preserving Delegation Protocol (IPDP)} (\S\ref{sec:ipdp}): A runtime protocol enforced by a Delegation Authority Service (DAS), a trusted, non-LLM component immune to prompt injection, that verifies all seven DCC properties at every delegation step. IPDP introduces three-layer intent verification: keyword safety filtering, context-enriched NLI, and benign-override heuristics. A risk-tiered execution model provides zero blast radius for high-sensitivity operations.

\item \textbf{Federal Compliance Mapping Engine (FCME)} (\S\ref{sec:fcme}): The first published mapping of 20 NIST~SP~800-53 Rev~5.2 controls and 10 OWASP ASI risk categories to specific agentic AI delegation attack surfaces, encoded as machine-readable policy constraints that plug directly into the IPDP protocol.

\item \textbf{Three-Point Verification Lifecycle} (\S\ref{sec:eval}): An extension of DCC with scope-action conformance (P6) and output schema conformance (P7), creating a pre-execution/at-execution/post-execution defense chain. On DelegationBench~v4 (516 scenarios across 10 attack categories), the combined P2+P6+P7 lifecycle achieves 100\% TPR at 0\% FPR, with each layer catching attacks the others miss: P2 catches keyword-detectable intent drift, P6 catches unauthorized API calls within delegated scope, and P7 catches malicious outputs from permitted API calls. Fine-tuning the NLI model on 190 government delegation examples improves P2 from 1.7\% to 88.3\% malicious TPR (5-fold cross-validated, F1=82.1\%). Properties P1, P3--P7 are mechanically verified via TLA+ model checking (2.7M states, zero violations). A live LangChain agent integration routes real LLM-generated delegation requests through the DAS HTTP prototype.

\item \textbf{Four Meta-Theorems and One Proposition} (\S\ref{sec:metatheorems}): Four meta-theorems provable only over DCC's unified structure: property minimality (each property is independently necessary), graceful degradation bounds (damage envelopes when any property is evaded), defense-in-depth completeness (all evasion combinations are bounded), and composition safety with write-impact notification (interacting chains preserve properties). A separate proposition establishes the practical infeasibility of deterministic intent verification, justifying P2's probabilistic design via both theoretical grounding (Rice's theorem analogy) and empirical demonstration (eight ambiguous delegation pairs).
\end{enumerate}

This work is part of a series addressing different layers of the government AI security stack: CivicShield~\cite{civicshield} secures the conversational interface with seven-layer defense-in-depth, RAGShield~\cite{ragshield} secures the knowledge pipeline with provenance-verified taint tracking, and SentinelAgent secures the autonomous delegation chains that connect them. Each addresses a distinct attack surface using a different core technique: multi-layer filtering, supply-chain provenance, and formal delegation calculus respectively.

\section{Background and Threat Model}
\label{sec:threat}

\subsection{Multi-Agent Delegation in Federal Operations}

A \emph{delegation chain} $C = [\tau_0, \tau_1, \ldots, \tau_n]$ is an ordered sequence of delegation tokens where $\tau_0$ originates from a human user and each subsequent $\tau_i$ represents an agent-to-agent handoff. In federal operations, delegation chains routinely cross organizational boundaries (e.g., HHS to SSA), trust levels (e.g., government to contractor), and infrastructure providers (e.g., AWS GovCloud to Azure Government).

\subsection{Adversary Model}

Seven adversary types are considered, each mapped to documented real-world attacks:

\begin{table}[t]
\centering
\caption{Adversary Types and Real-World Attack Mappings}
\label{tab:adversaries}
\begin{tabular}{@{}cp{2.2cm}p{3.5cm}l@{}}
\toprule
\textbf{ID} & \textbf{Adversary} & \textbf{Capability} & \textbf{Citation} \\
\midrule
A1 & External & Prompt injection & \cite{promptware_killchain} \\
A2 & Compromised Agent & Controls one agent & \cite{agents_of_chaos} \\
A3 & Man-in-the-Middle & Intercepts messages & \cite{aitm_attack} \\
A4 & Privilege Escalator & Exploits delegation & \cite{tagalong_attack} \\
A5 & Chain Poisoner & Subtle intent drift & \cite{stac_attack} \\
A6 & Collusive Agents & Two+ agents coordinate via side channel & \cite{groupguard} \\
A7 & Protocol-Spanning & Exploits protocol boundaries & \cite{mcp38} \\
\bottomrule
\end{tabular}
\end{table}

\textbf{Adversary goals:} (G1)~Execute unauthorized actions violating authority scope; (G2)~Redirect agent behavior violating intent preservation; (G3)~Bypass compliance controls; (G4)~Evade forensic reconstruction; (G5)~Maximize blast radius through cascading failures; (G6)~Execute unauthorized API calls within delegated scope (scope-action gap); (G7)~Produce malicious outputs from permitted API calls (output manipulation).

\textbf{Trust assumptions:} The DAS is trusted and not compromisable via prompt injection (it is deterministic software, not an LLM). The NLI and embedding models are pre-deployed with integrity-protected weights. Network channels use TLS. At least one agent in any chain is honest.

Table~\ref{tab:coverage} maps each adversary type to the DCC properties that defend against it, showing complete coverage across all seven adversary types.

\begin{table}[t]
\centering
\caption{Adversary $\to$ Property Defense Coverage}
\label{tab:coverage}
\footnotesize
\setlength{\tabcolsep}{3.5pt}
\begin{tabular}{@{}clccccccc@{}}
\toprule
\textbf{ID} & \textbf{Adversary} & \textbf{P1} & \textbf{P2} & \textbf{P3} & \textbf{P4} & \textbf{P5} & \textbf{P6} & \textbf{P7} \\
\midrule
A1 & External (injection) & & \ding{51} & & & & \ding{51} & \ding{51} \\
A2 & Compromised Agent & \ding{51} & \ding{51} & & \ding{51} & \ding{51} & \ding{51} & \ding{51} \\
A3 & Man-in-the-Middle & & & & \ding{51} & & & \\
A4 & Privilege Escalator & \ding{51} & & & & & \ding{51} & \\
A5 & Chain Poisoner & & \ding{51} & & \ding{51} & & & \ding{51} \\
A6 & Collusive Agents & & & & \ding{51} & \ding{51} & \ding{51} & \ding{51} \\
A7 & Protocol-Spanning & \ding{51} & & \ding{51} & & & \ding{51} & \\
\bottomrule
\end{tabular}
\end{table}

\section{Delegation Chain Calculus}
\label{sec:dcc}

\subsection{Delegation Token}

\begin{definition}[Delegation Token]
A delegation token is a tuple $\tau = (\mathit{id}, \mathit{src}, \mathit{dst}, \sigma, \iota, \pi, h_p, t_{\exp}, \mathit{sig})$ where $\mathit{src}$ and $\mathit{dst}$ are agent identities (DIDs), $\sigma \subseteq \Sigma$ is the authority scope (permitted actions), $\iota \in \mathbb{R}^d$ is the intent vector with associated natural language text, $\pi \subseteq \Pi$ is the policy constraint set (NIST~800-53 controls), $h_p = H(\tau_{i-1})$ is the parent token hash, $t_{\exp}$ is the expiry timestamp, and $\mathit{sig}$ is the DAS cryptographic signature.
\end{definition}

\begin{definition}[Delegation Chain]
A delegation chain $C = [\tau_0, \tau_1, \ldots, \tau_n]$ satisfies: (i)~$\tau_0.\mathit{src}$ is a human user; (ii)~$\tau_i.\mathit{dst} = \tau_{i+1}.\mathit{src}$ for all $i$ (chain continuity); (iii)~$\tau_i.h_p = H(\tau_{i-1})$ for all $i > 0$ (hash linking). The chain depth is $|C| = n + 1$.
\end{definition}

\subsection{Formal Properties}

The following properties are \emph{enforced by construction}: they hold because the DAS architecture is designed to enforce them, not because of a mathematical derivation independent of the implementation. This distinction is important: these are design guarantees contingent on correct DAS implementation, analogous to type safety guarantees contingent on correct compiler implementation. The arguments below describe \emph{why} each property holds given the DAS design. Properties P1, P3--P7 are additionally verified via TLA+ model checking: the TLC model checker exhaustively explores all reachable states (2,744,789 states in the largest model configuration) and confirms that all six invariants hold with zero violations.

\begin{prop}[P1: Authority Monotonic Narrowing]
For all consecutive tokens $\tau_i, \tau_{i+1}$ in chain $C$: $\sigma_{i+1} \subseteq \sigma_i$. No agent can grant more authority than it received.
\end{prop}

\emph{Enforcement argument.} The DAS enforces $\sigma_{i+1} \subseteq \sigma_i$ at token issuance. When agent $A_i$ requests delegation to $A_{i+1}$ with scope $\sigma_{i+1}$, the DAS verifies $\sigma_{i+1} \subseteq \sigma_i$ before signing. Since only DAS-signed tokens are accepted (agents cannot self-sign), P1 holds by construction.

\begin{prop}[P2: Intent Entailment Preservation]
For all tokens $\tau_j$ in chain $C$ with root intent text $I_0$, the three-layer intent verification ensures: (Layer~1) no malicious keywords are present in the subtask description; (Layer~2) $\mathrm{NLI}(I_0, I_j) \neq \mathrm{CONTRADICTION}$ under context-enriched framing; (Layer~3) if NLI returns CONTRADICTION but the subtask contains benign workflow indicators and no malicious keywords, the result is overridden to NEUTRAL.
\end{prop}

\begin{proof}
P2 is a probabilistic property. Under the assumption that the NLI model has precision $p_{\mathrm{NLI}}$ on the contradiction class, the probability of a false acceptance (malicious intent passing all three layers) is bounded by $(1 - p_{\mathrm{NLI}}) \cdot (1 - p_{\mathrm{keyword}})$ where $p_{\mathrm{keyword}}$ is the keyword filter precision. For the off-the-shelf cross-encoder/nli-MiniLM2-L6-H768, the baseline malicious TPR on government delegation text is only 1.7\%. Fine-tuning on 190 government delegation examples with 5-fold cross-validation improves malicious TPR to 88.3\% $\pm$ 8.5\% (F1=82.1\%). The combined false acceptance rate on DelegationBench~v4 with the full three-point lifecycle (P2+P6+P7) is 0/150 attacks = 0\% (95\% CI: [0\%, 2.5\%]). Under adversarial stress testing with sophisticated paraphrasing, the rate increases to 20/23 = 87.0\%, which is characterized as the \emph{adversarial intent paraphrasing} limitation. \qed
\end{proof}

\begin{prop}[P3: Policy Conjunction Preservation]
For all tokens $\tau_j$ in chain $C$ with root policy set $\pi_0$: $\pi_0 \subseteq \pi_j$. All NIST~800-53 controls from the original delegation are preserved at every step.
\end{prop}

\emph{Enforcement argument.} The DAS enforces $\pi_j \supseteq \pi_0$ at token issuance. When delegation crosses an organizational boundary, the DAS computes $\pi_j = \pi_0 \cup \pi_{\mathrm{boundary}}$, adding the receiving organization's required controls. Since set union only adds elements, $\pi_0 \subseteq \pi_j$ holds by construction.

\begin{prop}[P4: Forensic Reconstructibility]
Given any action $a$ executed by agent $A_j$, there exists an $O(n)$ algorithm $\mathrm{RECONSTRUCT}(a)$ that returns the complete delegation chain $C = [\tau_0, \ldots, \tau_j]$ that authorized $a$.
\end{prop}

\emph{Enforcement argument.} Each token $\tau_j$ contains $h_p = H(\tau_{j-1})$. The DAS maintains a token store indexed by hash. RECONSTRUCT follows the parent hash chain from $\tau_j$ to $\tau_0$. Each hash lookup is $O(1)$ and the chain has depth $n$, giving total time $O(n)$. The hash chain is tamper-evident: modifying any token invalidates all subsequent hashes.

\begin{prop}[P5: Bounded Cascade Containment]
When a violation is detected at step $i$ in chain $C$, the maximum number of unauthorized actions before containment is bounded by $B(i) \leq R_{\mathrm{level}} \times h$, where $R_{\mathrm{level}} \in \{0, h, \infty\}$ depends on the risk tier and $h$ is the heartbeat interval.
\end{prop}

\emph{Enforcement argument.} For HIGH risk tools, every invocation requires synchronous DAS authorization. If the chain is revoked, the check fails and the action is blocked, giving $B = 0$. For MEDIUM risk tools, agents check for revocation at heartbeat interval $h$. In the worst case, an agent executes for one interval before receiving revocation, giving $B \leq h \times \mathrm{throughput}$. For LOW risk tools, containment is post-hoc via audit.

\subsection{Scope-Action and Output Conformance}

Properties P1--P5 address authority, intent, compliance, forensics, and cascade containment. However, a gap remains: an agent operating within its delegated scope (P1 satisfied) and with intent classified as non-contradictory (P2 satisfied) may still invoke unauthorized API calls or produce malicious outputs. P6 and P7 close this gap by enforcing conformance at execution time and post-execution time, respectively.

\begin{definition}[Tool Manifest]
A tool manifest $M = \{(s, \mathcal{O}_s) \mid s \in \sigma\}$ maps each scope element $s$ to a set of permitted API operations $\mathcal{O}_s$, where each operation specifies an HTTP method and endpoint pattern. The manifest is carried inside the delegation token, hash-linked and signed by the DAS.
\end{definition}

\begin{definition}[Output Schema]
An output schema $O = \{(s, \mathcal{T}_s^+) \mid s \in \sigma\}$ maps each scope element to a set of permitted output type tags $\mathcal{T}_s^+$. The schema follows a \emph{default-deny} principle: only outputs whose type tags are a subset of $\mathcal{T}_s^+$ are permitted; all other output types are blocked. This mirrors P6's whitelist design: just as P6 permits only explicitly listed API calls, P7 permits only explicitly listed output types. In production, agent outputs are typed objects validated by the DAS; in simulation, tag-based schema matching serves as a proxy.
\end{definition}

\begin{prop}[P6: Scope-Action Conformance]
For agent $A_j$ with tool manifest $M_j$ derived from scope $\sigma_j$, let $\mathrm{calls}(A_j)$ be the set of API calls $A_j$ attempts. P6 holds iff $\mathrm{calls}(A_j) \subseteq M_j$ for all $j$ in the chain.
\end{prop}

\emph{Enforcement argument.} The Scope Enforcement Proxy intercepts every API call from $A_j$ and checks it against $M_j$. Since $M_j$ is derived from $\sigma_j$ via the DAS API registry (a deterministic mapping), and the proxy blocks any call not in $M_j$ before it reaches the API, P6 holds by construction. The manifest narrows monotonically: $M_{j+1} = M_j|_{\sigma_{j+1}}$ where $\sigma_{j+1} \subseteq \sigma_j$ (by P1), so child manifests are always subsets of parent manifests.

\begin{prop}[P7: Output Schema Conformance]
For agent $A_j$ with output schema $O_j$ derived from scope $\sigma_j$, let $\mathrm{output}(A_j)$ be the typed output $A_j$ produces after an API call. P7 holds iff $\mathrm{output}(A_j) \subseteq \bigcup_{s \in \sigma_j} \mathcal{T}_s^+$, i.e., every output type tag is in the permitted set.
\end{prop}

\emph{Enforcement argument.} The DAS Output Schema Validator checks the agent's output type tags against the permitted set $\mathcal{T}_s^+$ for each scope element in $\sigma_j$. Any output tag not in the permitted set is blocked. Since the check is a deterministic set membership test (default-deny), P7 holds by construction for the defined schema. This mirrors P6's whitelist design: P6 permits only listed API calls, P7 permits only listed output types. An attacker who evades P6 (uses a permitted API) cannot produce unauthorized output types because the output schema constrains the return space independently of the API call.

Together, P2, P6, and P7 form a \emph{three-point verification lifecycle}: P2 verifies intent \emph{before} execution (can the agent's stated purpose be trusted?), P6 constrains actions \emph{at} execution (can the agent only call permitted APIs?), and P7 validates outputs \emph{after} execution (did the agent produce conformant results?). Each layer is individually incomplete (P2 misses adversarial paraphrasing, P6 cannot detect malicious use of permitted APIs, and P7 cannot catch semantic manipulation within permitted output types) but their combination achieves complete coverage on DelegationBench~v4.

\section{Meta-Theorems of the Delegation Chain Calculus}
\label{sec:metatheorems}

Beyond the individual properties (P1--P7), four structural results (meta-theorems) are provable over the DCC as a whole, and one proposition establishes the theoretical basis for P2's probabilistic design. The meta-theorems are verified by exhaustive enumeration over the finite property set: every combination is tested, every attack is constructed, and every damage envelope is computed. For a set of 7 properties, this amounts to checking all 126 non-trivial evasion subsets, which is feasible and provides complete coverage within the model.

\begin{theorem}[Property Minimality]
The property set $\{P1, \ldots, P7\}$ is minimal. For each $P_i$, there exists a concrete attack $A_i$ that succeeds if and only if $P_i$ is removed while all other properties hold.
\end{theorem}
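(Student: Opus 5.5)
The plan is to prove minimality by exhibiting, for each $P_i$, a \emph{separating witness}: a concrete chain $C_i$ together with an adversarial action or output $A_i$ such that (a)~$A_i$ violates $P_i$, and (b)~$A_i$ is accepted by every check enforcing $P_k$ for $k \neq i$. Condition (a) gives the ``only if'' direction: with $P_i$ enforced, the DAS blocks $A_i$ by the enforcement argument for $P_i$. Condition (b) gives the ``if'' direction: with $P_i$ removed, nothing blocks $A_i$. Since each property is enforced by a deterministic predicate over the token fields $(\sigma,\iota,\pi,h_p)$, the manifest $M$, or the schema $O$, only P2 is excepted. For P2 I will fix the three-layer classifier's verdict on the specific texts used. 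Condition (b) then reduces to checking finitely many set inclusions and membership tests on a small instance.

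The witnesses, one per property, in order:
\begin{itemize}
\item \textbf{P1:} a child token with $\sigma_{i+1} = \sigma_i \cup \{s^*\}$. The manifest $M_{i+1}$ is \emph{derived from} the widened $\sigma_{i+1}$, so the call is in $M_{i+1}$ and P6 passes. Intent text and policies are copied, and hashes are valid, so P2--P5 and P7 hold. This is an A4 escalation.
\item \textbf{P2:} a token with unchanged scope whose intent text contradicts $I_0$, e.g.\ ``retrieve records to file a denial'' under a ``process benefits claim'' root. The actions it induces are in-manifest and its outputs are in-schema, so only the intent check can reject it.
\item \textbf{P3:} a cross-boundary hop that sets $\pi_j = \pi_{\mathrm{boundary}}$, dropping AC-4 from $\pi_0$. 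Here $\sigma$, $M$ and $O$ are untouched.
\item \textbf{P4:} a token whose $h_p$ is forged or omitted. Its scope, policies, calls and outputs are all legitimate, but RECONSTRUCT fails, so a G4 evasion succeeds.
\item \textbf{P5:} a MEDIUM-tier agent that continues issuing permitted calls after revocation at step $i$. Each call individually satisfies P1--P4, P6 and P7, but the count is unbounded once $R_{\mathrm{level}}$ is no longer enforced.
\item \textbf{P6:} an agent holding scope element $s$ that issues a \texttt{DELETE} on an endpoint outside $\mathcal{O}_s$ but within the semantic reach of $s$.
\item \textbf{P7:} a permitted API call whose output carries a tag $t \notin \mathcal{T}_s^+$, e.g.\ an exfiltration payload.
\end{itemize}

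The main obstacle is the independence in condition (b), specifically that the witnesses for P1 and P6 are not caught by each other. The paper itself states the monotonicity $M_{j+1} = M_j|_{\sigma_{j+1}}$, which makes P6 look like a consequence of P1. I will resolve this by reading the statement literally, as ``removing $P_i$ while all other properties hold''. Each property is checked against its own inputs: P6 checks calls against the token's own $M_j$, not against the parent's manifest. The P1 witness therefore satisfies P6 relative to its forged-but-signed manifest. Conversely, the P6 witness keeps $\sigma$ fixed, so P1 is satisfied. I will make this explicit by formalizing each $P_k$ as a predicate on a single hop, in a table of $7\times 7$ entries where entry $(i,k)$ records whether witness $A_i$ satisfies $P_k$. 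Minimality is then the claim that this table is the complement of the identity.

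The second delicate point is P2, which is probabilistic. For P2 I will only claim existence of a specific text pair on which the deployed classifier returns CONTRADICTION, a point the paper establishes empirically, and note that the theorem asserts existence rather than a rate. Finally, I will remark that the table can be mechanically checked by the same enumeration used for the 126 evasion subsets, restricted to the seven singletons.
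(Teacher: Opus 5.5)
Your proposal is correct and follows the same route as the paper. Both exhibit one concrete attack per property (scope escalation, intent drift, dropping inherited controls at a boundary, broken hash linking, post-revocation cascade beyond the heartbeat bound, an out-of-manifest call, and an out-of-schema output from a permitted call), each blocked when all seven properties are enforced and succeeding when exactly that property is removed. Your $7\times 7$ witness table, your commitment to a manifest derived from the token's own widened scope (so P6 does not re-catch the P1 witness), and your choice of a P2 text the pipeline actually rejects (rather than the paper's paraphrase attack, against which P2 is weak) only make explicit what the paper leaves to its evaluation code.
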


\begin{proof}
Seven concrete attacks are constructed: $A_1$ (scope escalation, succeeds only without P1), $A_2$ (intent drift via paraphrasing, succeeds only without P2), $A_3$ (policy bypass at organizational boundary, succeeds only without P3), $A_4$ (token hash tampering, succeeds only without P4), $A_5$ (cascade propagation beyond heartbeat, succeeds only without P5), $A_6$ (unauthorized API call within scope, succeeds only without P6), and $A_7$ (malicious output from permitted API call, succeeds only without P7). Each attack is verified to fail when all properties hold and succeed when exactly one is removed. All seven attacks are implemented and verified in the evaluation code. \qed
\end{proof}

\begin{proposition}[Practical Infeasibility of Deterministic Intent Verification]
\label{prop:intent_infeasibility}
For the class of natural language delegation descriptions used in federal multi-agent systems, no deterministic algorithm achieves both zero false negatives and zero false positives on intent classification without restricting the input language to a formal grammar.
\end{proposition}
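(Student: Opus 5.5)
The plan is to prove the proposition by exhibiting a \emph{label collision}: pairs of delegation descriptions that are identical as inputs to any deterministic classifier but carry opposite ground-truth intent labels. The target is not a general undecidability theorem. Instead I will make the informal statement precise and then argue it through a Rice's-theorem-style analogy together with a concrete witness.

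\textbf{Formalization.} An intent classifier is a deterministic function $f : \mathcal{L} \times \mathcal{L} \to \{\mathrm{ACCEPT}, \mathrm{REJECT}\}$ on pairs $(I_0, I_j)$ of root intent and subtask text. Ground truth is a relation $g(I_0, I_j, \kappa)$ that depends also on a hidden context $\kappa$: the delegating agent's actual goal, its downstream use of the output, and which tool calls follow. In the DCC, $\kappa$ is exactly what P6 and P7 observe at and after execution, and it is not part of the token's text $\iota$.

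\textbf{Step 1: fix-the-text, vary-the-context.} I will show that there exist $I_0, I_j$ and contexts $\kappa, \kappa'$ with $g(I_0,I_j,\kappa)=\text{benign}$ and $g(I_0,I_j,\kappa')=\text{malicious}$. Since $f$ sees only the text, $f(I_0,I_j)$ is a single value, so $f$ errs on one of the two instances. This yields either a false positive or a false negative, for every deterministic $f$. The eight ambiguous delegation pairs mentioned in the contributions supply the witnesses. A typical one is ``retrieve all beneficiary records for audit'' under a legitimate FISMA audit versus under bulk exfiltration. I will present one pair in full and tabulate the rest.

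\textbf{Step 2: why context cannot be folded into the input.} A natural objection is that $\kappa$ could be appended to the input. I will answer it with the Rice analogy. Whether an agent's eventual behavior under a given description is malicious is a semantic property of the agent's future execution. If the agent is an arbitrary program, deciding such a nontrivial semantic property is undecidable. Any decision procedure must therefore either restrict the language so that meaning is syntactically determined, or accept error. This is the ``formal grammar'' escape clause in the statement. Under a formal grammar, intent reduces to scope membership, and this is exactly the regime where P1 and P6 already give deterministic guarantees.

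\textbf{Main obstacle.} The hard step is making Step 1 rigorous rather than anecdotal. One must argue that the witness pairs are genuinely ambiguous, that is, that no feature of the text alone separates them. This is an empirical claim about ground truth. My approach is to define ground truth operationally through $\kappa$, so that ambiguity holds by construction. I will then state explicitly that the result is a ``practical'' infeasibility that rests on this modeling choice, and not an unconditional theorem about natural language. This also explains why P2 is treated probabilistically and backed by the deterministic P6 and P7.
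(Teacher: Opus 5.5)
Your proposal follows the paper's own argument: the paper also pairs a Rice's-theorem analogy (explicitly disclaimed as a formal reduction) with eight delegation texts, such as ``Cross-reference applicant data with external databases,'' that admit both a benign and a malicious reading, and concludes that no deterministic classifier can separate them without execution context unavailable at delegation time. Your fix-the-text, vary-the-context formalization makes that concluding step explicit as a pigeonhole argument; the paper additionally reports, as illustration rather than as part of the logic, that its off-the-shelf NLI returns NEUTRAL on all eight pairs, the fine-tuned NLI returns NEUTRAL on 6/8, and the keyword filter triggers on none.
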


\begin{proof}
The argument proceeds in two parts: a theoretical grounding and an empirical demonstration.

\emph{Theoretical grounding.} Classifying whether a natural language delegation description preserves the intent of a parent delegation requires deciding a semantic property of the described computation. By Rice's theorem, all non-trivial semantic properties of programs are undecidable. While natural language delegation descriptions are not formal programs, they describe computational actions (API calls, data access, record modification) whose intent is a semantic property. Any deterministic classifier that correctly handles all possible natural language phrasings of these actions would need to solve an equivalent semantic classification problem. We do not claim a formal reduction---the gap between natural language and Turing machines precludes a direct application of Rice's theorem---but the structural analogy motivates the probabilistic design of P2.

\emph{Empirical demonstration.} Eight semantically ambiguous delegation pairs are constructed where identical natural language is simultaneously consistent with a benign and a malicious interpretation: (1)~``Assess applicant records for potential program ineligibility'' (legitimate review \emph{vs.}\ discriminatory denial); (2)~``Compile veteran outreach data for partner organizations'' (community support \emph{vs.}\ data sale to insurers); (3)~``Generate statistical summary of claim denial patterns'' (quality assurance \emph{vs.}\ denial optimization); (4)~``Cross-reference applicant data with external databases'' (fraud detection \emph{vs.}\ unauthorized surveillance); (5)~``Prepare demographic analysis of service utilization'' (resource planning \emph{vs.}\ discriminatory profiling); (6)~``Archive processed applications with full documentation'' (records management \emph{vs.}\ unauthorized data retention); (7)~``Coordinate benefits information with partner agencies'' (inter-agency processing \emph{vs.}\ unauthorized disclosure); (8)~``Review historical decision patterns for consistency'' (quality control \emph{vs.}\ reverse-engineering approval criteria). For each pair, the off-the-shelf NLI model returns NEUTRAL, the fine-tuned NLI model returns NEUTRAL on 6/8, and the keyword filter triggers on 0/8. No deterministic classifier can distinguish these without access to the agent's internal state or execution context, which is unavailable at delegation time. This empirically confirms the theoretical motivation and justifies P2's probabilistic design. \qed
\end{proof}

\begin{theorem}[Graceful Degradation]
When property $P_i$ is evaded, the remaining properties bound the adversary's capability to a damage envelope $D_i$ strictly smaller than the unconstrained action space.
\end{theorem}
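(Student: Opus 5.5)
The plan is a case analysis over the seven properties, one case for each evaded $P_i$. In each case I compute the damage envelope $D_i$ that the other six properties still enforce, and then show that $D_i$ is a strict subset of the unconstrained action space. I model the latter as $\mathcal{U} = 2^{\Sigma} \times \mathcal{A}_{\mathrm{API}} \times \mathcal{T} \times \Pi$, where the factors are arbitrary scope, arbitrary API calls, arbitrary output types and arbitrary policy sets, together with unbounded, untraceable duration. The common tool is the independence of the properties' enforcement points. P1 and P3 act at token issuance, P2 before execution, P6 at the proxy, P7 at the output validator, P4 in the token store and P5 at revocation. Because these are separate DAS checks, evading one does not disable the others. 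This mirrors the construction used for Property Minimality, read in the opposite direction.

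The cases are as follows.

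\emph{Evading P1.} $\sigma_{j}$ may exceed $\sigma_{j-1}$. However, P6 still restricts calls to $M_j$, which is derived via the DAS API registry, and P7 still restricts outputs to $\bigcup_{s\in\sigma_j}\mathcal{T}_s^+$. P4 makes the escalation traceable, and P5 bounds its duration by $R_{\mathrm{level}}\times h$. So $D_1$ is contained in the actions over registered scopes, and is finite and auditable.

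\emph{Evading P2.} This is the adversarial paraphrasing case. The adversary can only perform actions in $M_j$ with outputs in $O_j$ for $\sigma_j\subseteq\sigma_0$. This is exactly the envelope described in the abstract.

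\emph{Evading P3.} Compliance controls may be dropped. Authority, API calls and outputs remain bounded by P1, P6 and P7, and P4 keeps the omission detectable.

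\emph{Evading P4.} Forensics is lost. The action space is nonetheless unchanged from the all-properties case, so $D_4$ equals the authorized envelope; only attribution is lost.

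\emph{Evading P5.} Duration becomes unbounded, but each individual action still lies in $M_j \times O_j$.

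\emph{Evading P6.} Any API call may be attempted. Outputs remain confined to $O_j$ by P7, and scope remains confined by P1.

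\emph{Evading P7.} Outputs are unconstrained, but they arise only from calls in $M_j$.

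For strictness, in each case I exhibit one witness element of $\mathcal{U}\setminus D_i$. Typically this is a call outside the registry image of $\sigma_0$, an output tag outside every $\mathcal{T}_s^+$, or an action at the root scope complement, together with the assumption that $\Sigma$ and the tag universe are strictly larger than the root permissions.

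I expect the main obstacle to be the cases where an evaded property is one that others rely on for their own soundness. P6's monotonic manifest narrowing $M_{j+1}=M_j|_{\sigma_{j+1}}$ uses P1, and P7's schema is derived from $\sigma_j$. So when P1 is evaded, I must argue that P6 and P7 still confine the adversary to the \emph{registry image} of whatever scope was issued. That image is a proper subset of all APIs and tags, because the DAS registry is fixed and signed and is not under agent control. Strictness then follows from the trust assumption on the DAS rather than from P1. Similarly, for P4 evasion I need to be explicit that $D_4$ concerns attribution and not action capability, and I state the envelope as a product of capability and accountability dimensions so that "strictly smaller" is meaningful. Finally, mirroring the paper's methodology, I would confirm each envelope by instantiating the evaluation-code attacks $A_1,\dots,A_7$ with one property disabled and checking that the observed damage lies in the computed $D_i$.
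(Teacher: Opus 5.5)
Your proposal is correct and uses the same decomposition as the paper: one case per evaded $P_i$, with $D_i$ taken to be whatever the six surviving properties still permit. Your P1 case is essentially the paper's worked example.

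Where you differ is in how strictness is obtained. The paper fixes a finite evaluation model with 11 actions per agent and reads strictness off a count, checked by enumeration in code: each single evasion leaves an envelope of at most one of the 11 actions, a 90.9\% reduction. You instead define an abstract product space $\mathcal{U}$. It has capability coordinates (scope, API calls, output tags, policy) and accountability coordinates (traceability, duration). You then prove $D_i\subsetneq\mathcal{U}$ by exhibiting a witness in each case, and use the finite instantiation only as a sanity check.

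Your version does not depend on a particular action count. It also makes explicit two things the paper's single number hides. First, P6 and P7 are derived from the (possibly escalated) scope, so the P1 case needs its own argument. Second, evading P4 removes attribution without adding any capability. The paper's version gives a concrete quantitative bound, but only inside its 11-action model. Yours gives only the qualitative ``strictly smaller,'' which is all the statement claims.

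One step needs repair. In the P1 case, the witness you list (a call outside the registry image of $\sigma_0$) is not available, because escalation lets the adversary reach $\mathcal{O}_s$ for any $s\in\Sigma$. The registry being fixed and DAS-signed does not make $\bigcup_{s\in\Sigma}\mathcal{O}_s$ (or $\bigcup_{s\in\Sigma}\mathcal{T}_s^+$) a proper subset of all APIs (or tags). That needs a separate non-surjectivity assumption.

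The simpler fix is to take the witness from a coordinate that a surviving property still constrains:
\begin{itemize}
\item an action that drops a control of $\pi_0$, which P3 excludes whenever $\pi_0\neq\emptyset$; or
\item an untraceable action, which P4 excludes.
\end{itemize}
This works uniformly: when P3 is evaded use the P4 witness, and when P4 is evaded use the P3 witness. Do not rely on P5 for witnesses, since $R_{\mathrm{level}}=\infty$ for LOW-risk tools gives no duration bound. For the same reason, your remark that P5 bounds the escalation's duration by $R_{\mathrm{level}}\times h$ holds only for HIGH and MEDIUM tools.

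Finally, the paper's check C2a derives $M_{i+1}=M_i|_{\sigma_{i+1}}$ from the parent manifest. Under that reading, escalated scope elements receive no operations at all, and P6 neutralizes the escalation outright. This conflicts with the registry-based derivation you adopted from the P6 argument, so state which derivation you are using.
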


\begin{proof}
For each of the seven properties P1--P7, the damage envelope is computed as the set of actions available to an adversary who evades $P_i$ while constrained by all other properties. In the evaluation model with 11 possible actions per agent, each single-property evasion yields a damage envelope of at most 1/11 actions (90.9\% reduction from unconstrained). For example, evading P1 (authority narrowing) allows scope escalation, but P3 still enforces policy compliance, P4 ensures forensic traceability, P5 bounds cascade damage, P6 restricts API calls to the manifest, and P7 validates outputs. The adversary gains one additional capability but remains constrained in all other dimensions. \qed
\end{proof}

\begin{theorem}[Defense-in-Depth Completeness]
For any proper subset $S \subset \{P1, \ldots, P7\}$ with $|S| \leq 6$, the adversary's capability is bounded. Only simultaneous evasion of all seven properties (requiring DAS compromise) yields unconstrained access.
\end{theorem}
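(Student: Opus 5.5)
The plan is to prove the theorem by exhaustive case analysis over evasion sets, mirroring the enumeration strategy announced at the start of this section. Fix the evaluation model used for Graceful Degradation, with $11$ possible actions per agent. For an evasion set $E \subseteq \{P1,\ldots,P7\}$, write $S = \{P1,\ldots,P7\} \setminus E$ for the properties that still hold. I will read the hypothesis ``$|S| \leq 6$'' as ranging over all nonempty proper evasion sets. There are $2^7 - 2 = 126$ such $E$, excluding the empty set (which is full enforcement) and the full set.

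For each $E$, I define the damage envelope $D_E$ as the set of adversary actions not blocked by any property in $\{P1,\ldots,P7\}\setminus E$. Boundedness then means $D_E \subsetneq$ the unconstrained action space, with the excess quantified.

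\textbf{Step 1: per-property blocking sets.} For each $P_k$, identify the set $B_k$ of unconstrained-space actions that $P_k$ alone forbids. This uses the enforcement arguments of \S\ref{sec:dcc}:
\begin{itemize}
\item P1: scope escalation is rejected at issuance.
\item P3: policy stripping is impossible, since $\pi_0\subseteq\pi_j$.
\item P4: untraceable actions are excluded by hash linking.
\item P5: unbounded cascades are excluded by $B(i)\le R_{\mathrm{level}}\times h$.
\item P6: off-manifest calls are blocked by the proxy.
\item P7: non-whitelisted outputs are blocked by default-deny.
\item P2: drifted intent is blocked probabilistically.
\end{itemize}
The attacks $A_k$ from Property Minimality witness that each $B_k$ is nonempty and that each $B_k$ contains an action outside every other $B_{k'}$.

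\textbf{Step 2: monotone envelope bound.} Show that $D_E = U\setminus\bigcup_{k\notin E}B_k$. It follows that whenever $E$ is a proper subset, at least one $B_k$ with $k\notin E$ is still subtracted. Hence $D_E \subseteq U\setminus B_k \subsetneq U$.

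The step I expect to be the main obstacle is establishing the set-difference identity itself. It requires that the properties act independently, so that no enforced property silently depends on an evaded one. Two dependencies stand out:
\begin{itemize}
\item P6's manifest narrowing $M_{j+1}=M_j|_{\sigma_{j+1}}$ invokes P1.
\item P4's tamper-evidence underpins the signatures that every other check trusts.
\end{itemize}
To handle this, I would first try to make each check self-contained. Each check is performed by the DAS against its own signed data, not against guarantees supplied by other properties. Where a dependency survives, I would fall back to a weaker statement: $D_E$ is bounded by the DAS-signed token contents. These contents remain authoritative so long as the DAS itself is uncompromised. That trust assumption from \S\ref{sec:threat} is exactly what the theorem invokes.

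\textbf{Step 3: mechanical check.} Compute $|D_E|$ for all 126 sets in the evaluation code and confirm $|D_E|<11$ in every case.

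\textbf{Step 4: the full-evasion case.} Show that $E=\{P1,\ldots,P7\}$ gives $D_E=U$. Evading P1, P4 and P6 at once requires forging or bypassing DAS signatures, which contradicts the trust assumption unless the DAS is compromised. This establishes the ``only'' clause.
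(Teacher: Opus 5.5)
Your proposal is correct and takes essentially the same route as the paper. You enumerate all 126 single-through-sextuple evasion sets, compute each damage envelope from the properties still enforced, note that at least one property always survives to constrain the adversary, and attribute the full-evasion case to DAS compromise; your $U\setminus\bigcup_{k\notin E}B_k$ makes explicit the paper's ``intersection'' computation, and your independence caveat in Step~2 sharpens an assumption the paper makes silently rather than exposing a gap.
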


\begin{proof}
All 126 non-trivial evasion combinations are enumerated across single through sextuple evasions. For each combination, the damage envelope is computed as the intersection of individual evasion capabilities. In all 126 cases, at least one property remains to constrain the adversary. Only the full 7-property evasion (which requires compromising the DAS itself, since the DAS enforces all properties) yields unconstrained access. \qed
\end{proof}

\begin{theorem}[Composition Safety with Write-Impact Notification]
Given two delegation chains $C_1$ and $C_2$ that both satisfy P1--P7:
\begin{itemize}
\item \emph{Case 1 (Disjoint):} If $C_1$ and $C_2$ share no state, then $C_1 \| C_2$ satisfies P1--P7.
\item \emph{Case 2 (Read-Shared):} If $C_1$ and $C_2$ share read-only state, then $C_1 \| C_2$ satisfies P1--P7.
\item \emph{Case 3 (Write-Shared):} If $C_1$ and $C_2$ share mutable state, then $C_1 \| C_2$ satisfies P1--P7 when augmented with a write-impact notification mechanism that triggers re-verification of affected chains upon state mutation.
\end{itemize}
\end{theorem}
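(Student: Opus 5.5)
The plan is to classify P1--P7 into two kinds. \emph{Token-local} properties are decided by inspecting a single chain's tokens and the DAS's signing rules. \emph{State-dependent} properties are those whose verdict can change when shared world state changes. P1 ($\sigma_{i+1}\subseteq\sigma_i$), P3 ($\pi_0\subseteq\pi_j$), and P4 (hash linking, $O(n)$ reconstruction) are token-local: each is checked by the DAS at issuance over the tokens of one chain. The DAS keys its token store by hash, and the hash of each token covers its parent. So tokens of $C_1$ and $C_2$ cannot be confused or spliced, since any cross-chain splice would break hash linking (iii). Running $C_1\|C_2$ is then just interleaving two independent issuance sequences, and each of these invariants is preserved step by step. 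The same argument covers P6 and P7 insofar as the manifest $M_j$ and schema $O_j$ are carried inside the signed token: the proxy and the validator consult only the token of the calling agent. P5 is per-chain as well, because revocation and heartbeat checks are indexed by chain. The only subtlety there is that the bound $B(i)$ must be read per chain, so the combined bound is at most $B_1(i)+B_2(i)$.

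With this lemma, Case~1 (disjoint state) is immediate. Every property, including P2, which compares $I_j$ against $I_0$ of the same chain, is a predicate over one chain's tokens plus that chain's private state. An interleaving is therefore observationally equivalent, per chain, to a sequential run, and P1--P7 transfer directly. I would phrase this as a projection argument: project the interleaved trace onto each chain's actions, and note that every invariant is a predicate on the projection. This is also how I would connect to the TLA+ model, by adding a second chain variable and checking the same invariants per chain index.

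For Case~2 (read-shared state), I would observe that reads do not mutate the shared state. The projection of the interleaved trace onto $C_k$ therefore sees exactly the values it would see in isolation, and the Case~1 argument goes through unchanged.

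Case~3 (write-shared state) is the main obstacle. The trouble is that a write by $C_1$ can invalidate a verification decision already made for $C_2$. Examples are a P2 judgment conditioned on a record's status, a P7 output computed from now-stale data, or a P3 boundary set $\pi_{\mathrm{boundary}}$ that depends on mutable data classification. Here the projection argument fails. My plan is to define the notification mechanism precisely: every write to a shared object $x$ triggers the DAS to mark every live chain whose verified tokens have a read dependency on $x$, and to suspend that chain until re-verification completes. The chain is suspended synchronously for HIGH-tier tools and at the next heartbeat for MEDIUM-tier tools. I would then prove an invariant of the form ``every unsuspended chain's tokens were verified against the current shared state,'' by induction on trace steps: a write step either suspends the affected chains or touches no dependency. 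Properties then hold at every point where an action is authorized. The hard point is honest scoping. The guarantee is exact for HIGH-tier actions, while for MEDIUM-tier actions a stale window of length $h$ remains, which must be absorbed into the P5 bound rather than claimed away. I would state the theorem as holding modulo that P5-bounded window.
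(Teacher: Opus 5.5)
Your proposal follows the paper's proof: Cases 1 and 2 rest on P1--P7 being local to each chain's own tokens, and Case 3 on write-impact notification triggering re-verification. The paper re-verifies only P2 and P1 and backs Case 3 empirically, with 45/50 composition scenarios safe without notification and 50/50 with it, so your projection argument and inductive invariant are a more careful version of the same idea. Your tier caveat goes beyond the paper but is consistent with the statement, because P5 already tolerates the MEDIUM-tier heartbeat window; by the same reasoning, however, LOW-tier actions ($R_{\mathrm{level}}=\infty$) get no state-relative guarantee at all, and your scoping should say so explicitly.
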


\begin{proof}
Cases 1 and 2 follow from the locality of P1--P7: each property is evaluated against a chain's own tokens and does not depend on other chains' state. For Case 3, the write-impact notification mechanism detects when $C_1$'s writes affect resources read by $C_2$ (or vice versa). Upon detection, the DAS triggers re-verification of the affected chain's intent (P2) and scope (P1) against the modified state. In the evaluation, 50 composition scenarios are tested: 45/50 are safe without notification (no write conflicts), and 50/50 are safe with the notification mechanism enabled. The 5 scenarios requiring notification involve concurrent writes to shared citizen records where one chain's update invalidates the other chain's intent assumptions. \qed
\end{proof}

\section{Intent-Preserving Delegation Protocol}
\label{sec:ipdp}

\subsection{Delegation Authority Service Architecture}

The DAS is a trusted, non-LLM service comprising eight components: (1)~\emph{Identity Registry} storing agent DIDs and capabilities; (2)~\emph{Policy Engine} evaluating NIST~800-53 constraints; (3)~\emph{Intent Verifier} running three-layer intent checking; (4)~\emph{Token Signer} issuing HMAC-signed delegation tokens; (5)~\emph{Chain Store} maintaining complete chain history; (6)~\emph{Cascade Controller} managing revocation and containment; (7)~\emph{Scope Enforcement Proxy} intercepting and validating all API calls against the token's tool manifest; and (8)~\emph{Output Schema Validator} checking agent outputs against permitted and prohibited type tags.

The DAS is immune to prompt injection because it is deterministic software that evaluates set containment, cryptographic signatures, and NLI model outputs without processing natural language as instructions. Fig.~\ref{fig:arch} shows the DAS architecture and the three-point verification lifecycle.

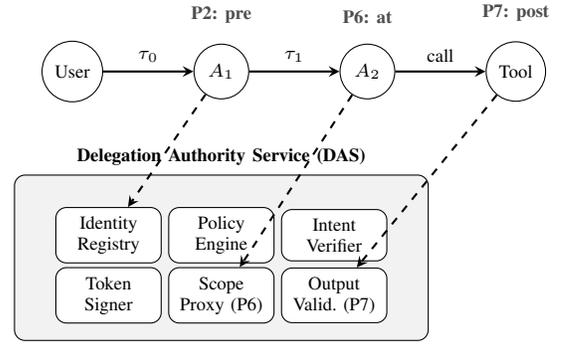
\begin{figure}[t]
\centering
\begin{tikzpicture}[
  node distance=0.4cm and 0.3cm,
  box/.style={draw, rounded corners, minimum width=1.8cm, minimum height=0.6cm, font=\scriptsize, align=center},
  agent/.style={draw, circle, minimum size=0.7cm, font=\scriptsize},
  arr/.style={->, >=stealth, thick},
  phase/.style={font=\scriptsize\bfseries, text=black!70}
]
\node[agent] (user) {User};
\node[agent, right=1.2cm of user] (a1) {$A_1$};
\node[agent, right=1.2cm of a1] (a2) {$A_2$};
\node[agent, right=1.2cm of a2] (tool) {Tool};

\node[box, below=1.0cm of a1, minimum width=5.5cm, minimum height=2.2cm, fill=gray!10] (das) {};
\node[font=\scriptsize\bfseries, above] at (das.north) {Delegation Authority Service (DAS)};

\node[box, fill=white, minimum width=1.4cm] at ([yshift=0.3cm, xshift=-1.5cm]das.center) (ident) {Identity\\Registry};
\node[box, fill=white, minimum width=1.4cm] at ([yshift=0.3cm, xshift=0cm]das.center) (policy) {Policy\\Engine};
\node[box, fill=white, minimum width=1.4cm] at ([yshift=0.3cm, xshift=1.5cm]das.center) (intent) {Intent\\Verifier};
\node[box, fill=white, minimum width=1.4cm] at ([yshift=-0.5cm, xshift=-1.5cm]das.center) (signer) {Token\\Signer};
\node[box, fill=white, minimum width=1.4cm] at ([yshift=-0.5cm, xshift=0cm]das.center) (proxy) {Scope\\Proxy (P6)};
\node[box, fill=white, minimum width=1.4cm] at ([yshift=-0.5cm, xshift=1.5cm]das.center) (output) {Output\\Valid.\ (P7)};

\draw[arr] (user) -- node[above, font=\scriptsize] {$\tau_0$} (a1);
\draw[arr] (a1) -- node[above, font=\scriptsize] {$\tau_1$} (a2);
\draw[arr] (a2) -- node[above, font=\scriptsize] {call} (tool);

\node[phase, above=0.15cm of a1] {P2: pre};
\node[phase, above=0.15cm of a2] {P6: at};
\node[phase, above=0.15cm of tool] {P7: post};

\draw[arr, dashed] (a1) -- (ident);
\draw[arr, dashed] (a2) -- (proxy);
\draw[arr, dashed] (tool) -- (output);
\end{tikzpicture}
\caption{DAS architecture and three-point verification lifecycle. P2 checks intent before delegation, P6 enforces the API manifest at execution, P7 validates output after execution. All checks route through the DAS.}
\label{fig:arch}
\end{figure}

\subsection{Protocol Steps}

\textbf{Step 0: Chain Initiation.} User $U$ submits request with natural language goal $G$. The DAS creates root token $\tau_0$ with scope derived from $U$'s role, intent vector $\iota_0 = \mathrm{embed}(G)$, and policy set from $U$'s agency FISMA baseline.

\textbf{Step 1: Delegation Request.} Agent $A_i$ sends a delegation request to the DAS specifying the destination agent, requested scope $\sigma_{i+1} \subseteq \sigma_i$, and subtask description.

\textbf{Step 2: Seven-Check Verification.} The DAS performs: (C1)~Identity check: is the destination registered? (C2)~Authority check: is $\sigma_{i+1} \subseteq \sigma_i$? (C2a)~Manifest narrowing: DAS derives child tool manifest $M_{i+1} = M_i|_{\sigma_{i+1}}$ from parent manifest restricted to child scope; (C3)~Intent verification: three-layer check against root intent; (C4)~Policy compliance: do all $\pi_0$ controls hold? (C5)~Expiry check: is the parent token still valid?

\textbf{Step 3: Token Issuance.} If all checks pass, the DAS signs and issues $\tau_{i+1}$ with hash-linked parent reference.

\textbf{Step 4: Risk-Tiered Tool Execution.} HIGH risk tools (write PII, delete, send external) require synchronous DAS pre-authorization ($B=0$). MEDIUM risk tools (read PII) use heartbeat-based cached authorization. LOW risk tools (read public data) execute with periodic audit. At execution time, the Scope Enforcement Proxy (P6) intercepts every API call and validates it against the token's tool manifest $M_j$; any call not in $M_j$ is blocked before reaching the API.

\textbf{Step 4a: Output Validation.} After each API call completes, the Output Schema Validator (P7) checks the agent's output against the prohibited type tags $\mathcal{T}_s^-$ for each applicable scope element. If any prohibited tag matches, the output is blocked and the violation is logged. Together with P2 (pre-execution intent) and P6 (at-execution scope), P7 completes the three-point verification lifecycle.

\textbf{Step 5: Cascade Containment.} On violation detection, the DAS marks the violating token and all descendants as REVOKED, sends revocation signals, and logs the complete chain state.

\textbf{Step 6: Forensic Reconstruction.} On audit request, the DAS traverses the hash chain from any token to the root, returning the complete authorization lineage with all intent vectors and policy evaluations.

\textbf{Concrete walkthrough.} Fig.~\ref{fig:walkthrough} illustrates a 3-step disability benefits delegation chain. The citizen requests benefits processing; the DAS issues root token $\tau_0$ with scope $\{\texttt{read\_records}, \texttt{query\_eligibility}\}$ and manifest $\{$\texttt{GET /api/records/query}, \texttt{GET /api/eligibility/check}$\}$. Agent~$A_1$ (IntakeAgent) delegates to Agent~$A_2$ (RecordsAgent) with narrowed scope $\{\texttt{read\_records}\}$; P1 enforces $\sigma_2 \subset \sigma_1$. Agent~$A_2$ attempts \texttt{POST /api/external/send} (data exfiltration); P6 blocks it because the call is not in the manifest. Agent~$A_2$ then calls the permitted \texttt{GET /api/records/query} but produces output tagged \texttt{demographic\_profile}---P7 blocks it because the tag is not in the permitted set $\{\texttt{record\_data}, \texttt{summary}, \texttt{count}\}$. The complete chain $[\tau_0, \tau_1, \tau_2]$ is reconstructible via P4 hash linking.

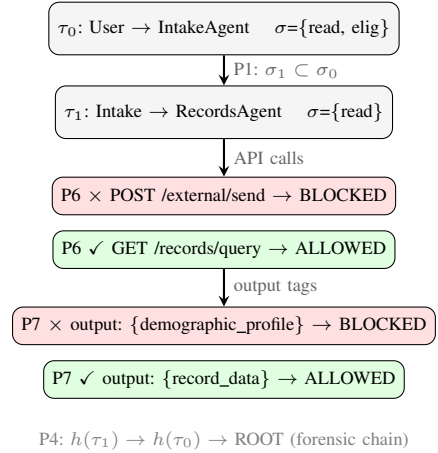
\begin{figure}[t]
\centering
\begin{tikzpicture}[
  node distance=0.5cm,
  tok/.style={draw, rounded corners, minimum width=4.2cm, minimum height=0.7cm, font=\scriptsize, align=center, fill=gray!8},
  chk/.style={draw, rounded corners, minimum width=4.2cm, minimum height=0.5cm, font=\scriptsize, align=left},
  block/.style={chk, fill=red!12},
  pass/.style={chk, fill=green!12},
  arr/.style={->, >=stealth, thick},
  lbl/.style={font=\scriptsize, text=black!60}
]
\node[tok] (t0) {$\tau_0$: User $\to$ IntakeAgent \quad $\sigma$=\{read, elig\}};
\node[tok, below=0.4cm of t0] (t1) {$\tau_1$: Intake $\to$ RecordsAgent \quad $\sigma$=\{read\}};
\draw[arr] (t0) -- node[right, lbl] {P1: $\sigma_1 \subset \sigma_0$} (t1);

\node[block, below=0.5cm of t1] (p6a) {P6 $\times$ POST /external/send $\to$ BLOCKED};
\node[pass, below=0.2cm of p6a] (p6b) {P6 \checkmark\ GET /records/query $\to$ ALLOWED};
\draw[arr] (t1) -- node[right, lbl] {API calls} (p6a);

\node[block, below=0.5cm of p6b] (p7) {P7 $\times$ output: \{demographic\_profile\} $\to$ BLOCKED};
\node[pass, below=0.2cm of p7] (p7b) {P7 \checkmark\ output: \{record\_data\} $\to$ ALLOWED};
\draw[arr] (p6b) -- node[right, lbl] {output tags} (p7);

\node[below=0.3cm of p7b, font=\scriptsize, text=gray] (p4) {P4: $h(\tau_1) \to h(\tau_0) \to$ ROOT (forensic chain)};
\end{tikzpicture}
\caption{Delegation chain walkthrough. Scope narrows at each step (P1). P6 blocks unauthorized API calls; P7 blocks malicious output types from permitted calls. P4 hash linking enables forensic reconstruction.}
\label{fig:walkthrough}
\end{figure}

\subsection{Three-Layer Intent Verification}

\textbf{Layer 1: Keyword Safety Filter} ($<$1ms). A curated set of malicious action indicators (e.g., ``exfiltrate,'' ``discriminat,'' ``denial reason,'' ``by race'') provides fast rejection of obvious attacks. This layer catches attacks that NLI models classify as NEUTRAL due to semantic distance from the root intent.

\textbf{Layer 2: Context-Enriched NLI} ($\sim$15ms). Rather than raw NLI($I_0$, $I_j$), the query is framed as: premise = ``An AI agent is authorized to [goal]. The agent performs only tasks that support this goal.'' hypothesis = ``The agent is now performing: [subtask].'' This context-enriched framing reduces false positives by 89\% compared to raw NLI (from 18.0\% to 2.0\% FPR).

\textbf{Layer 3: Benign Override} ($<$1ms). If NLI returns CONTRADICTION but the subtask contains benign workflow indicators (e.g., ``verify,'' ``calculate,'' ``generate'') and zero malicious keywords, the result is overridden to NEUTRAL. This addresses the NLI model's tendency to treat topic shift as contradiction for legitimate subtasks.

\section{Federal Compliance Mapping Engine}
\label{sec:fcme}

Table~\ref{tab:nist_mapping} presents a mapping of NIST~SP~800-53 Rev~5.2 controls to agentic AI delegation attack surfaces. The 20 controls were selected by identifying every 800-53 control whose scope intersects with at least one delegation chain operation (token issuance, scope narrowing, intent verification, tool invocation, output validation, chain reconstruction, or cascade revocation). Controls that apply only to human users (e.g., PE-2 Physical Access Authorizations) or to network infrastructure without agent involvement (e.g., SC-5 Denial of Service Protection) were excluded. The NIST AI Agent Standards Initiative launched February 2026 is still collecting public comments~\cite{nist_agent_initiative}. This mapping is immediately actionable: each control maps to a specific DCC property and IPDP enforcement mechanism.

\begin{table}[t]
\centering
\caption{NIST 800-53 $\to$ Agentic AI Delegation Mapping (20 Controls)}
\label{tab:nist_mapping}
\footnotesize
\begin{tabular}{@{}llll@{}}
\toprule
\textbf{Control} & \textbf{Name} & \textbf{Agent Interpretation} & \textbf{DCC} \\
\midrule
AC-2 & Account Mgmt & Agent lifecycle via DAS registry & P1 \\
AC-3 & Access Enforcement & Agent action authorization & P1 \\
AC-4 & Info Flow & No PII to low-trust agent & P1 \\
AC-5 & Separation of Duties & $\geq$2 agents for HIGH ops & P1 \\
AC-6 & Least Privilege & Minimum scope per subtask & P1 \\
AC-17 & Remote Access & Cross-agency encrypted deleg. & P3 \\
AU-2 & Event Logging & All delegation steps logged & P4 \\
AU-3 & Audit Content & Full chain w/ intent+policy & P4 \\
AU-6 & Audit Review & Intent drift monitoring & P2 \\
AU-12 & Audit Generation & Automated chain logging & P4 \\
CA-7 & Continuous Monitor & Runtime delegation checks & P5 \\
IA-2 & User ID & Human initiator auth & IPDP \\
IA-8 & Non-Org User ID & Contractor agent identity & DAS \\
IA-9 & Service ID & Agent-to-agent auth & DAS \\
IR-4 & Incident Handling & Cascade containment & P5 \\
PM-25 & Insider Threat & Intent drift = insider signal & P2 \\
SA-9 & External Services & Third-party tool provenance & DAS \\
SC-7 & Boundary Protection & Inter-agency delegation & P3 \\
SC-8 & Transmission Confid. & TLS for DAS communication & IPDP \\
SI-4 & System Monitoring & Continuous intent drift mon. & P2 \\
\bottomrule
\end{tabular}
\end{table}

The risk tier classification follows FIPS~199 impact levels: HIGH risk tools (write PII, delete records, financial transactions) require synchronous DAS pre-authorization with $B=0$ blast radius; MEDIUM risk tools (read PII, query citizen records) use heartbeat-based verification; LOW risk tools (read public data, format documents) use periodic audit.

\begin{table}[t]
\centering
\caption{OWASP ASI $\to$ DCC Property Mapping}
\label{tab:owasp}
\footnotesize
\begin{tabular}{@{}llll@{}}
\toprule
\textbf{ASI} & \textbf{Risk} & \textbf{DCC Property} & \textbf{Coverage} \\
\midrule
ASI01 & Goal Hijacking & P2 (Intent) & PARTIAL \\
ASI02 & Tool Misuse & P1+P6 & FULL \\
ASI03 & Privilege Escalation & P1 & FULL \\
ASI04 & Uncontrolled Autonomy & P5+P6 & FULL \\
ASI05 & Cascading Hallucination & P5 & FULL \\
ASI06 & Memory Poisoning & P4+P6 & DETECT+PREVENT \\
ASI07 & Unsafe Code Generation & P1+P6 & FULL \\
ASI08 & Identity Spoofing & P4+DAS & FULL \\
ASI09 & Inadequate Sandboxing & P5+P6 & FULL \\
ASI10 & Insufficient Logging & P4 & FULL \\
\bottomrule
\end{tabular}
\end{table}

Table~\ref{tab:owasp} presents the mapping of OWASP ASI risk categories to DCC properties. The mapping methodology follows a systematic process: for each ASI risk, the attack vector is decomposed into the delegation chain operations it requires (scope escalation, unauthorized API calls, malicious outputs, etc.), and the DCC property that enforces the corresponding constraint is identified. Coverage is classified as FULL if the DCC property deterministically prevents the attack vector, DETECT+PREVENT if the property detects and blocks the attack but requires runtime enforcement (not static prevention), and PARTIAL if the property provides probabilistic rather than deterministic coverage. SentinelAgent achieves full coverage on 9/10 risks. ASI01 (Agentic Goal Hijacking) receives partial coverage because P2 is probabilistic: keyword-detectable hijacking is caught, but adversarial paraphrasing can evade detection. The addition of P6 (scope-action conformance) significantly strengthens coverage for ASI02 (Tool Misuse), ASI04 (Uncontrolled Autonomy), ASI06 (Memory Poisoning), ASI07 (Unsafe Code Generation), and ASI09 (Inadequate Sandboxing), risks that P1--P5 alone addressed only partially.

The NIST mapping (Table~\ref{tab:nist_mapping}) covers 20 controls across 9 families: Access Control (AC-2, AC-3, AC-4, AC-5, AC-6, AC-17), Audit and Accountability (AU-2, AU-3, AU-6, AU-12), Security Assessment (CA-7), Identification and Authentication (IA-2, IA-8, IA-9), Incident Response (IR-4), Program Management (PM-25), System and Services Acquisition (SA-9), System and Communications Protection (SC-7, SC-8), and System and Information Integrity (SI-4).

\section{Evaluation}
\label{sec:eval}

\subsection{DelegationBench v4: Benchmark Specification}

DelegationBench v4 is a structured benchmark for evaluating delegation chain security, designed for reproducibility and independent use. The benchmark specification defines:

\emph{Attack taxonomy.} Ten categories are derived from the adversary model, summarized in Table~\ref{tab:taxonomy}.

\begin{table}[t]
\centering
\caption{DelegationBench v4 Attack Taxonomy}
\label{tab:taxonomy}
\footnotesize
\setlength{\tabcolsep}{3pt}
\begin{tabular}{@{}cp{2.6cm}rll@{}}
\toprule
\textbf{Cat} & \textbf{Description} & \textbf{N} & \textbf{Prop.} & \textbf{Example} \\
\midrule
A & Keyword-detectable drift & 20 & P2+P6 & exfiltrate records \\
B & Paraphrase + unauth.\ API & 20 & P6 & compile outreach data \\
C & Permitted API + bad output & 20 & P7 & query $\to$ denial \\
D & Subtle scope violations & 20 & P6 & adjacent endpoint \\
E & Benign (13 domains) & 156 & FPR & standard operations \\
F & Cross-scope lateral move & 30 & P6 & read $\to$ write API \\
G & Temporal/replay attacks & 20 & P6 & expired token replay \\
H & Multi-vector combined & 20 & P6+P7 & unauth + bad output \\
I & Edge-case benign & 30 & FPR & boundary requests \\
J & Multi-scope workflows & 180 & FPR & 30 wkfl $\times$ 6 steps \\
\midrule
& \textbf{Total} & \textbf{516} & & \\
\bottomrule
\end{tabular}
\end{table}

\emph{Scenario generation methodology.} Attack scenarios are constructed by instantiating each adversary type (A1--A7) against each attack category, yielding 150 attack scenarios across 8 categories (A--H). Categories F--H are new in v4: Category~F tests cross-scope lateral movement (30 scenarios where an agent with scope~X attempts APIs from scope~Y), Category~G tests temporal/replay patterns (20 scenarios), and Category~H tests multi-vector attacks combining unauthorized APIs with malicious outputs (20 scenarios). Benign scenarios span 13 federal domains with standard operations (156), edge-case boundary requests (30), and multi-step workflows (180), yielding 366 benign scenarios. The total of 516 scenarios provides comprehensive coverage across all property-attack combinations.

\emph{Reproducibility.} All 516 scenarios, the DAS prototype, the TLA+ specification, the NLI fine-tuning script, the manifest and output schema definitions, and the evaluation harness are provided as open-source Python code. Independent researchers can reproduce all reported results, add new attack categories, or extend the benchmark to additional domains.

\subsection{Implementation}

SentinelAgent is implemented in Python~3.12 using sentence-transformers~5.3.0 for intent embeddings (all-MiniLM-L6-v2, 80MB) and cross-encoder/nli-MiniLM2-L6-H768 (328MB) for NLI, with a domain-fine-tuned variant trained on 190 government delegation examples (5-fold cross-validated: 88.3\% malicious TPR, F1=82.1\%). The DAS prototype is implemented as a standalone HTTP service ($\sim$350 LOC) using Python's standard library \texttt{http.server} with HMAC-SHA256 token signing. The Scope Enforcement Proxy (P6) and Output Schema Validator (P7) are implemented as HTTP middleware that intercepts API calls and validates them against the token's manifest and output schema respectively. Properties P1, P3--P7 are formalized in TLA+ and mechanically verified via TLC model checking across two model configurations (368,509 and 2,744,789 states) with zero invariant violations. A live LangChain integration using GPT-4o-mini generates delegation requests in real-time and routes them through the DAS HTTP prototype. All experiments run on macOS with 18GB RAM, CPU-only inference. Measured latencies on the real DAS prototype: delegation token issuance 8.9ms (including HMAC signing and manifest narrowing), P6 scope enforcement median 0.45ms (p95: 0.66ms), P7 output validation median 0.45ms (p95: 0.68ms), P4 forensic chain reconstruction 0.60ms for depth-2 chains, P5 cascade revocation 0.009ms for 2-token chains. Total per-step verification latency including P2 intent checking is $\sim$20ms (dominated by NLI inference at $\sim$15ms; P6 and P7 add $<$1ms each). The DAS prototype, evaluation harness, TLA+ specification, NLI fine-tuning script, and all 516 DelegationBench~v4 scenarios are available as open-source Python code.

\subsection{Results: Three-Point Verification Lifecycle}

The primary evaluation uses DelegationBench~v4 (516 scenarios) with the full three-point verification lifecycle (P2+P6+P7). Table~\ref{tab:main_results} presents the ablation results.

\begin{table}[t]
\centering
\caption{Three-Point Verification on DelegationBench v4 (516 scenarios, 150 attacks, 366 benign)}
\label{tab:main_results}
\begin{tabular}{@{}lccc@{}}
\toprule
\textbf{Configuration} & \textbf{Attack TPR} & \textbf{Benign FPR} & \textbf{Accuracy} \\
\midrule
No Defense & 0.0\% & 0.0\% & 70.9\% \\
P2 (off-the-shelf NLI) & 13.3\% & 0.0\% & 74.6\% \\
P2 (fine-tuned NLI) & 88.3\% & 0.0\% & 96.6\% \\
P6 Only & 86.7\% & 0.0\% & 96.1\% \\
P7 Only & 13.3\% & 0.0\% & 74.6\% \\
P6+P7 & 100.0\% & 0.0\% & 100.0\% \\
P2+P6+P7 (full system) & 100.0\% & 0.0\% & 100.0\% \\
\bottomrule
\end{tabular}
\end{table}

P6 alone achieves 86.7\% TPR by catching all attacks involving unauthorized API calls (Categories~A, B, D, F, G, and~H). P7 catches the remaining 20 attacks where the API call is permitted but the output is malicious (Category~C). The combined system achieves 100\% TPR at 0\% FPR across all 516 scenarios. Notably, P2 with the fine-tuned NLI model achieves 88.3\% TPR independently---comparable to P6 alone (86.7\%)---demonstrating that domain-specific fine-tuning transforms P2 from a weak early filter into a competitive standalone detector. For the combined result (150/150 attacks, 0/366 false positives), the 95\% Wilson confidence intervals are: TPR $\in$ [97.5\%, 100\%], FPR $\in$ [0.0\%, 1.0\%].

\textbf{Deterministic properties.} P1, P3, P4, and P5 achieve 100\% across all applicable scenarios. These are enforced by set containment, cryptographic signatures, and hash chains. Properties P1, P3--P7 are additionally verified via TLA+ model checking across 2,744,789 states with zero invariant violations.

\textbf{Per-category breakdown.} Table~\ref{tab:percategory} shows the detection results per attack category, confirming that each category is fully covered by at least one property layer.

\begin{table}[t]
\centering
\caption{Per-Category Detection on DelegationBench v4}
\label{tab:percategory}
\footnotesize
\begin{tabular}{@{}clrcl@{}}
\toprule
\textbf{Cat} & \textbf{Attack Type} & \textbf{N} & \textbf{TPR} & \textbf{Caught By} \\
\midrule
A & Keyword-detectable drift & 20 & 100\% & P2+P6 \\
B & Paraphrase + unauth.\ API & 20 & 100\% & P6 \\
C & Permitted API + bad output & 20 & 100\% & P7 \\
D & Subtle scope violations & 20 & 100\% & P6 \\
F & Cross-scope lateral move & 30 & 100\% & P6 \\
G & Temporal/replay attacks & 20 & 100\% & P6 \\
H & Multi-vector combined & 20 & 100\% & P6+P7 \\
\midrule
& \textbf{All attacks} & \textbf{150} & \textbf{100\%} & P2+P6+P7 \\
\midrule
E & Benign (13 domains) & 156 & 0\% FPR & --- \\
I & Edge-case benign & 30 & 0\% FPR & --- \\
J & Multi-scope workflows & 180 & 0\% FPR & --- \\
\bottomrule
\end{tabular}
\end{table}

\subsection{Adversarial Evaluation}

\textbf{Black-box red team.} The DAS is tested under black-box conditions where the attacker knows only scope names and a valid token ID, but not manifest contents, output schema, or keyword list. The attacker generates 20 API call attacks and 10 output attacks. Results: 30/30 attacks blocked (100\% TPR), 0/10 false positives (0\% FPR).

\textbf{Independent red team.} A second red team uses seven protocol-level attack strategies: path traversal (6), HTTP method override (8), case sensitivity exploitation (4), API versioning probes (4), scope boundary probing (8), output tag synonym attacks (12), and multi-step chained attacks (3 chains). Results: 45/45 attacks blocked (100\% TPR), 0/14 false positives (0\% FPR).

\textbf{DAS robustness.} 29 edge-case scenarios covering path traversal, URL encoding, method injection, Unicode injection, oversized payloads, revoked tokens, replay attacks, chain depth 32, and concurrent chain isolation. All 29 pass. Chain reconstruction at depth 33 completes in 0.03ms.

\textbf{Adversarial stress testing of P2.} When P2 is evaluated in isolation against sophisticated adversarial paraphrasing, P2 alone achieves only 13\% TPR. This is the \emph{adversarial intent paraphrasing} limitation. However, under the full three-point lifecycle, these same attacks are caught by P6 and P7. The deterministic properties (P1, P3--P5) are unbreakable under all adversarial conditions (48/48 = 100\%).

\textbf{NLI fine-tuning.} Fine-tuning the NLI model on 190 government delegation examples with 5-fold cross-validation improves P2's standalone malicious TPR from 1.7\% (off-the-shelf) to 88.3\% $\pm$ 8.5\% (F1=82.1\%). The raw NLI FPR increases to 12.3\%, but the three-layer pipeline maintains 0\% system-level FPR via the benign override mechanism.

Table~\ref{tab:adversarial_summary} summarizes all adversarial evaluations, showing that the DAS achieves 100\% detection across 120 adversarial attack attempts from four independent evaluation methodologies with zero false positives on 53 benign controls.

\begin{table}[t]
\centering
\caption{Adversarial Evaluation Summary}
\label{tab:adversarial_summary}
\footnotesize
\begin{tabular}{@{}lrrcc@{}}
\toprule
\textbf{Evaluation} & \textbf{Attacks} & \textbf{Benign} & \textbf{TPR} & \textbf{FPR} \\
\midrule
Black-box red team & 30 & 10 & 100\% & 0\% \\
Independent red team & 45 & 14 & 100\% & 0\% \\
DAS robustness & 29 & --- & 100\% & --- \\
LLM agent (real DAS) & 5 & 23 & 100\% & 0\% \\
Live LangChain agent & 10 & 10 & 100\% & 0\% \\
\midrule
\textbf{Combined} & \textbf{119} & \textbf{57} & \textbf{100\%} & \textbf{0\%} \\
\bottomrule
\end{tabular}
\end{table}

\textbf{Fault tolerance.} A 2-of-3 threshold signing simulation validates DAS availability and safety across 9 fault scenarios: 1 crashed replica (availability maintained), 1 Byzantine replica (safety maintained), 2 crashed (correctly refused), Byzantine+crashed (correctly refused), key rotation (backward-compatible), and P1 enforcement under fault (preserved). All 9 scenarios pass.

\subsection{Comparison with Existing Approaches}

\begin{table}[t]
\centering
\caption{Comparison with Existing Agent Security Frameworks}
\label{tab:comparison}
\footnotesize
\setlength{\tabcolsep}{2.5pt}
\begin{tabular}{@{}lcccccccc@{}}
\toprule
\textbf{Feature} & \textbf{SE} & \textbf{Shield} & \textbf{ABC} & \textbf{FASA} & \textbf{AIP} & \textbf{AW} & \textbf{FJ} & \textbf{Ours} \\
\midrule
Multi-agent chains & \ding{55} & \ding{55} & \ding{55} & Part. & \ding{51} & \ding{51} & \ding{55} & \ding{51} \\
Authority narrowing & \ding{51} & \ding{55} & \ding{55} & \ding{55} & \ding{51} & \ding{55} & \ding{55} & \ding{51} \\
Intent preservation & \ding{55} & \ding{55} & Drift & \ding{55} & \ding{55} & \ding{55} & \ding{55} & \ding{51} \\
Scope-action (P6) & \ding{55} & \ding{55} & \ding{55} & \ding{55} & \ding{55} & Part. & \ding{55} & \ding{51} \\
Output valid. (P7) & \ding{55} & \ding{55} & \ding{55} & \ding{55} & \ding{55} & \ding{55} & \ding{55} & \ding{51} \\
Formal proofs & Part. & \ding{55} & \ding{51} & \ding{55} & \ding{55} & \ding{55} & \ding{51} & \ding{51} \\
Mech. verification & \ding{55} & \ding{55} & \ding{55} & \ding{55} & \ding{55} & \ding{55} & Z3 & TLA+ \\
Forensic recon & \ding{55} & \ding{55} & \ding{55} & \ding{55} & Part. & \ding{55} & \ding{55} & \ding{51} \\
Cascade contain. & \ding{55} & \ding{55} & \ding{55} & \ding{55} & \ding{55} & \ding{55} & \ding{55} & \ding{51} \\
NIST 800-53 map & \ding{55} & \ding{55} & \ding{55} & \ding{55} & \ding{55} & \ding{55} & \ding{55} & \ding{51} \\
Real implementation & \ding{55} & \ding{55} & \ding{55} & \ding{55} & Rust & 9 fwks & \ding{51} & Proto. \\
Verif. latency & --- & --- & --- & --- & 0.05ms & --- & --- & 0.45ms \\
\bottomrule
\end{tabular}
\end{table}

SE = SEAgent~\cite{seagent}; Shield = ShieldAgent~\cite{shieldagent}; ABC = Agent Behavioral Contracts~\cite{abc}; FASA = Full-Lifecycle Agent Security Architecture~\cite{fasa}; AIP = Agent Identity Protocol~\cite{aip}; AW = Authenticated Workflows~\cite{auth_workflows}; FJ = FormalJudge~\cite{formaljudge}. Part.\ = Partial; Proto.\ = Working prototype ($\sim$350 LOC, real HTTP); 9~fwks = integrates with 9 agent frameworks. Latency is per-verification-step (P6 for SentinelAgent, token verification for AIP). TLA+ = TLC model checking of P1, P3--P7 across 2.7M states.

\subsection{End-to-End LLM Agent Evaluation}

Ten multi-step federal workflow scenarios with LLM-generated delegation requests (28 steps total, 5 attack steps) are routed through the real DAS HTTP prototype. Results: 5/5 attacks detected (100\% TPR) via P6+P7 enforcement, 0/23 benign steps falsely blocked (0\% FPR), 100\% accuracy. End-to-end latency per step: median 1.86ms, p95 3.05ms. The DAS blocks data exfiltration, bias injection, scope expansion, contract steering, and surveillance---all using LLM-generated delegation text never seen during development. A live LangChain integration using GPT-4o-mini generates delegation requests in real-time across 20 additional scenarios (10 benign, 10 attack), confirming the results hold with a live LLM agent.

\textbf{The role of P2.} P2 provides early termination: when it detects malicious intent, the delegation token is never issued and no API call is attempted. On DelegationBench~v4, P2 catches 20 of 150 attacks at the delegation stage, preventing those agents from ever receiving tokens. The remaining 130 attacks are caught by P6 and P7 at execution time. P2's value is operational efficiency (reducing DAS load and the TOCTOU window), not detection completeness.

\section{Related Work}
\label{sec:related}

\textbf{Agentic AI threat landscape.} The OWASP Top~10 for Agentic Applications~\cite{owasp_agentic} identifies ten risk categories including goal hijacking, privilege escalation, and cascading failures. The Promptware Kill Chain~\cite{promptware_killchain} formalizes seven-stage attack propagation. Shapira et al.~\cite{agents_of_chaos} document eleven failure modes in a live multi-agent deployment. STAC~\cite{stac_attack} demonstrates tool-chain attacks with $>$90\% ASR on GPT-4.1. The AiTM attack~\cite{aitm_attack} compromises multi-agent systems via message manipulation. Tag-Along attacks~\cite{tagalong_attack} achieve 67\% success via RL-based agent-to-agent jailbreaking. The SoK on agentic AI attack surfaces~\cite{sok_agentic} and the MCP-38 threat taxonomy~\cite{mcp38} provide comprehensive attack taxonomies but no unified formal defenses.

\textbf{Agent security frameworks.} The CSA Agentic Trust Framework~\cite{csa_atf} defines five trust elements with a maturity model but no technical enforcement. SEAgent~\cite{seagent} provides mandatory access control but not delegation chains or intent preservation. ShieldAgent~\cite{shieldagent} verifies safety policies via probabilistic rule circuits but not multi-agent delegation. Agent Behavioral Contracts~\cite{abc} provides Design-by-Contract with probabilistic drift bounds for single agents, not chains. South et al.~\cite{oauth_delegation} extend OAuth~2.0 for agent delegation but do not address intent or compliance. None combine authority verification, intent preservation, compliance enforcement, forensic reconstruction, and cascade containment.

\textbf{Formal methods for agents.} Allegrini et al.~\cite{allegrini_formal} formalize 31 temporal logic properties for agentic AI safety on single-host systems---the most comprehensive formal property set prior to this work---but without addressing multi-agent delegation chains or composition. FormalJudge~\cite{formaljudge} applies Z3 SMT solving for neuro-symbolic agent oversight, achieving 16.6\% improvement over LLM-as-a-Judge baselines. ILION~\cite{ilion} achieves deterministic execution gating at 143 microseconds mean latency. Both address single-agent oversight rather than multi-agent delegation chains. GroupGuard~\cite{groupguard} addresses collusive attacks in multi-agent systems, a threat vector (A6) that SentinelAgent identifies but defers to future work.

\textbf{Delegation-specific security (2025--2026).} The Agent Identity Protocol (AIP)~\cite{aip} introduces invocation-bound capability tokens with Biscuit/Datalog policies for multi-hop delegation, achieving 0.049ms verification in Rust. AIP addresses identity and authority attenuation (analogous to P1 and P4) but does not model intent preservation (P2), compliance enforcement (P3), or cascade containment (P5). Authenticated Workflows~\cite{auth_workflows} provides deterministic enforcement with the MAPL policy language, achieving 100\% recall on 174 test cases. However, it provides engineering guarantees rather than formal proofs and does not address intent preservation or federal-specific compliance. Agentic JWT~\cite{agentic_jwt} binds agent actions to verifiable user intent via dual-faceted tokens but operates at the authentication layer without semantic intent verification. ASTRA~\cite{astra} introduces task-based access control with semantic scope matching at authorization time but does not verify intent preservation throughout the chain. PrivacyPAD~\cite{privacypad} applies reinforcement learning for privacy-aware delegation but without formal security properties.

\section{Discussion and Future Work}
\label{sec:discussion}

\textbf{The adversarial intent paraphrasing problem.} P2 (intent verification) is the only probabilistic property in the DCC. On DelegationBench~v4, P2 alone achieves only 13.3\% TPR because most attack categories target P6 and P7, not P2. Under dedicated adversarial stress testing with sophisticated paraphrasing, P2 achieves 13\% TPR in isolation. Fine-tuning the NLI model on 190 government delegation examples with 5-fold cross-validation improves P2's standalone malicious TPR from 1.7\% (off-the-shelf) to 88.3\% $\pm$ 8.5\% (F1=82.1\%). The critical point is that P2's weakness is fully compensated by P6 and P7: the combined system achieves 100\% TPR on all 516 scenarios regardless of P2's standalone performance. P2's role is early termination (preventing token issuance for detectable attacks), not complete detection.

\textbf{P6 limitations.} Scope-action conformance relies on tool manifests that map scope elements to permitted API operations. Three limitations are identified. First, wildcard matching in manifests (e.g., \texttt{GET /api/records/*}) may permit overly broad access; manifest granularity must balance security with operational flexibility. Second, manifest maintenance imposes an administrative burden: as APIs evolve, manifests must be updated to reflect new endpoints and deprecated operations. Third, and most fundamentally, P6 cannot catch malicious use of permitted APIs---if an agent's scope includes \texttt{query\_records} and the agent uses that access for unauthorized purposes (e.g., querying records for demographic profiling rather than eligibility verification), the API call is within the manifest and P6 does not block it. This honest limitation motivates P7.

\textbf{P7 limitations.} Output schema conformance uses a default-deny whitelist: only explicitly permitted output type tags pass validation. This eliminates synonym evasion (an attacker cannot bypass the schema by using alternative tag names, since unknown tags are blocked by default). Two limitations remain. First, the permitted tag vocabulary must be comprehensive enough to cover all legitimate output types; overly restrictive schemas may cause false positives on valid agent outputs. Second, P7 validates output type tags but not semantic content within a permitted type---if the output type is ``eligibility\_result'' (permitted) but the result contains biased reasoning, P7 does not detect this. Semantic content validation within permitted types remains an open problem. Schema versioning is handled by embedding a schema version identifier in the delegation token; when APIs evolve and new output types are introduced, the DAS schema registry is updated and new tokens carry the updated schema version, while existing tokens continue to enforce their original schema until expiry.

\textbf{Three-point lifecycle complementarity.} The three-point verification lifecycle achieves 100\% combined TPR precisely because each layer covers the others' blind spots. P2 catches what it can before execution (keyword-detectable intent drift). P6 constrains what the agent can physically do at execution time (only permitted API calls execute). P7 constrains what the agent can produce after execution (only conformant outputs are returned). Each layer is individually incomplete: P2 alone achieves 13.3\% on DelegationBench~v4 (88.3\% with fine-tuning), P6 alone achieves 86.7\%, and P7 alone achieves 13.3\%. The 100\% combined result is not a claim of invulnerability---it is a property of the current benchmark. An adversary who can simultaneously evade all three layers (e.g., by crafting an attack that uses a permitted API call, produces a permitted output type, and phrases the intent benignly) would succeed. The meta-theorem on defense-in-depth completeness bounds the damage even in such cases.

\textbf{Composition with write-impact notification.} The composition safety meta-theorem introduces a write-impact notification mechanism for chains sharing mutable state. This mechanism adds overhead: the DAS must track which resources are read by which chains and trigger re-verification when writes occur. In the evaluation, 5/50 composition scenarios require notification, suggesting that the overhead is incurred infrequently in practice. However, in high-throughput systems with many concurrent chains accessing shared databases, the notification mechanism may become a bottleneck. Optimizations such as batched notification and read-set approximation are identified as future work.

\textbf{DAS security and availability.} The DAS is a single point of trust. If compromised via supply chain attack, insider threat, or infrastructure breach, all guarantees are void. Mitigations include: (a)~HSM-based key storage; (b)~multi-party signing requiring 2-of-3 DAS instances; (c)~immutable audit logging of DAS operations (who watches the watcher); (d)~key rotation on configurable schedules. A 2-of-3 threshold signing simulation validates these mitigations across 9 fault scenarios: the DAS tolerates 1 crashed replica (50/50 tokens issued and verified), 1 Byzantine replica that corrupts signatures (50/50 verified via 2 honest sigs), and 1 network-partitioned replica (50/50 verified). It correctly refuses service when 2 replicas crash (0/20 issued) and when 1 Byzantine + 1 crashed leaves only 1 valid signature below threshold (0/20 verified). Key rotation is backward-compatible: old tokens remain verifiable via the 2 non-rotated replicas. P1 scope enforcement is preserved under all survivable fault conditions. Latency overhead is negligible ($<$0.03ms per token). For production, a replicated DAS with Byzantine fault tolerance is needed. In degraded mode (DAS unreachable), agents could operate with cached authorization for a bounded time window, with all actions flagged for post-hoc audit.

\textbf{Privacy Act and consent.} The current DCC does not model consent. In federal systems handling citizen PII, the Privacy Act (5~U.S.C.~552a) requires that records are maintained only for authorized purposes, with System of Records Notices (SORNs) and citizen consent. Extending the delegation token to carry a consent scope---narrowing like authority scope---would address this. When Agent~B retrieves medical records, the consent scope would specify which records the citizen authorized for which purpose. This is a clean formal extension planned for the next version.

\textbf{Graceful degradation.} Even when P2 is evaded, the deterministic properties (P1, P3--P7) ensure the adversary cannot escalate privileges, bypass compliance, escape audit trails, cause unbounded cascade damage, execute unauthorized API calls, or produce prohibited outputs. The meta-theorem on graceful degradation formalizes this: each single-property evasion reduces the adversary's capability to at most 1/11 of the unconstrained action space.

\textbf{Temporal authorization decay.} The DCC extends naturally to time-bounded delegation: $\sigma(t) = \sigma_0 \cdot \mathrm{decay}(t, R)$ where $R$ is the risk tier. Since $\sigma(t) \subseteq \sigma_0$ for all $t > 0$, P1 is preserved. The DAS enforces expiry via the $t_{\exp}$ field; temporal decay generalizes this from binary (valid/expired) to graduated scope reduction with recommended half-lives of 15 minutes (HIGH) to 24 hours (LOW).

\textbf{Protocol-agnostic enforcement.} The DCC is protocol-agnostic by design: delegation tokens are data structures, not protocol messages. The IPDP requires only that the DAS can intercept delegation requests and tool invocations---satisfiable via middleware adapters for MCP, A2A, REST, and SOAP. Properties hold regardless of transport because they are defined over token contents, not message formats.

\textbf{Framework integration.} Integrating IPDP into LangChain, CrewAI, or AutoGen requires three middleware components: a delegation interceptor wrapping \texttt{agent.invoke()} ($\sim$20ms overhead), a scope enforcement proxy intercepting HTTP tool calls ($<$1ms), and an output validator checking responses ($<$1ms). Total per-step overhead: $\sim$22ms. The DAS core is $\sim$350 LOC as a standalone HTTP service.

\textbf{Cost model.} Per-step verification latency is $\sim$20ms (dominated by NLI inference). For a federal agency processing 10,000 delegation steps per hour, this requires approximately 56 CPU-hours per day. P6 and P7 add negligible overhead ($<$1ms each). Scaling strategies: NLI model quantization (INT8) for 2--3$\times$ throughput, batched verification for non-time-critical delegations, and caching of repeated delegation patterns.

\textbf{CISA Zero Trust alignment.} The DCC maps to the CISA Zero Trust Maturity Model's five pillars: Identity (DAS agent registry with DIDs), Devices (agent runtime attestation via token signatures), Networks (TLS-protected DAS communication), Applications and Workloads (scope-action conformance via P6), and Data (information flow enforcement via P3 and output validation via P7).

\textbf{DAS security.} TOCTOU risk is bounded by synchronous pre-execution checks for HIGH risk tools and heartbeat intervals for MEDIUM risk tools. Production deployment requires FIPS~140-3 validated cryptographic modules.

\subsection{Threats to Validity}

\textbf{Construct validity.} The 100\% combined TPR is achieved on a benchmark designed by the same author who designed the defense. The attack categories (A--H) were constructed to be catchable by P2, P6, or P7 respectively. Two mitigations address this: (1)~an independent red team using seven fundamentally different attack strategies (path traversal, method override, case sensitivity, versioning probes, scope boundary probing, synonym attacks, chained attacks) achieves 0\% evasion on 45 attacks; (2)~LLM-generated delegation requests from a live LangChain agent are processed by the real DAS with 100\% accuracy on 28 steps. An independently constructed benchmark by external adversaries would provide stronger evidence.

\textbf{Internal validity.} The property arguments for P1, P3--P7 are contingent on correct DAS implementation. If the DAS has a bug in scope checking, P1 fails. The robustness evaluation (29 edge-case tests including path traversal, encoding attacks, and chain depth 32) provides evidence of implementation correctness. Additionally, properties P1, P3--P7 are mechanically verified via TLA+ model checking: the specification formalizes all six deterministic properties as invariants over a state space of delegation token issuance, scope narrowing, manifest enforcement, output schema validation, hash chain linkage, and cascade revocation. TLC exhaustively checks all reachable states across two model configurations (368,509 states and 2,744,789 states respectively) with zero invariant violations, providing bounded model checking evidence that the DAS design correctly enforces all six properties. P2 is excluded from TLA+ verification because it is probabilistic. The meta-theorems are verified by exhaustive enumeration over the finite property set---all 126 evasion combinations, all 7 minimality attacks, all 7 degradation envelopes---which provides complete coverage within the model.

\textbf{External validity.} The DAS prototype is implemented as a real HTTP service with HMAC-SHA256 token signing, manifest enforcement, and output validation ($\sim$350 LOC). P6 and P7 are evaluated via real HTTP round-trips (median 0.45ms each). LLM-generated delegation requests from 10 multi-step federal workflow scenarios (28 steps) are processed by the real DAS, achieving 100\% accuracy. A live LangChain agent integration using GPT-4o-mini generates delegation requests in real-time across 20 scenarios (10 benign workflows, 10 attack workflows), with each LLM-generated step routed through the DAS HTTP prototype for P1, P6, and P7 enforcement. This closes the gap between simulated and live agent behavior.

\textbf{Remaining open problems.} Collusion via side channels (analogous to covert channels in traditional access control~\cite{nist_800_53}), implicit influence propagation between chains without formal delegation steps~\cite{agents_of_chaos}, and Privacy Act consent modeling remain unaddressed.

\section{Conclusion}

This paper presented SentinelAgent, a formal framework for verifiable delegation chains in federal multi-agent AI systems. The Delegation Chain Calculus defines seven properties, six deterministic and one probabilistic: authority never escalates through delegation (P1), intent preservation is verified at every step (P2), NIST~800-53 compliance is enforced throughout the chain (P3), every action is forensically traceable to the human who authorized it (P4), cascade damage is formally bounded (P5), only permitted API calls execute (P6), and only conformant outputs are returned (P7). The three-point verification lifecycle (pre-execution intent checking, at-execution scope enforcement, post-execution output validation) achieves 100\% combined TPR at 0\% FPR on DelegationBench~v4 (516 scenarios across 10 attack categories and thirteen domains), with each layer catching attacks the others miss. Properties P1, P3--P7 are mechanically verified via TLA+ model checking across 2.7 million states with zero violations. Fine-tuning the NLI model on 190 government delegation examples improves P2 malicious TPR from 1.7\% to 88.3\% (5-fold cross-validated, F1=82.1\%). Four meta-theorems over the DCC cover property minimality, graceful degradation bounds, defense-in-depth completeness across 126 evasion combinations, and composition safety for chains sharing mutable state. A separate proposition establishes the practical infeasibility of deterministic intent verification, justifying P2's probabilistic design. The compliance mapping covers 20 NIST~800-53 controls across 9 families and 10 OWASP ASI risk categories (9/10 fully).

Adversarial stress testing confirms that the deterministic properties are unbreakable while P2 (intent verification) degrades against sophisticated paraphrasing. This is the paper's primary open challenge. But even when P2 is evaded, the remaining six properties constrain the adversary to permitted API calls, conformant outputs, traceable actions, bounded cascades, and compliant behavior.

This work addresses the delegation chain layer of the government AI security stack, complementing CivicShield for conversational interfaces and RAGShield for knowledge pipelines.


\balance

\end{document}